\newcommand{\am}[1]{{\color{magenta} \footnotesize[Ashwin: #1] }}
\newcommand{\yt}[1]{{\color{blue} \footnotesize[Yuchao: #1] }}
\newcommand{\eat}[1]{}
\newtheorem{definition}{Definition}[section]
\newcommand\numberthis{\addtocounter{equation}{1}\tag{\theequation}}
\newcommand{\norm}[1]{\left\lVert#1\right\rVert}
\newcommand{\Lagr}{\mathcal{L}}
\DeclareMathOperator*{\argmin}{arg\,min}
\newcommand{\paratitle}[1]{\textbf{#1}}
\newcommand{\D}{D}
\newcommand{\A}{A}
\newcommand{\dom}{\Sigma}
\newcommand{\tup}{t}
\newcommand{\nY}{\tilde{y}}
\newcommand{\ystar}{y^{*}}
\newcommand{\qstar}{q^{*}}
\newcommand{\pp}{p} 
\newcommand{\qq}{q} 
\newcommand{\f}{h}
\newcommand{\Wstate}{W_S}
\newcommand{\Wgroup}{W_G}
\newcommand{\round}[1]{\ensuremath{\lfloor#1\rceil}}
\newcommand{\squishlist}{
	\begin{list}{$\bullet$}
		{
			\setlength{\itemsep}{0pt}
			\setlength{\parsep}{3pt}
			\setlength{\topsep}{3pt}
			\setlength{\partopsep}{0pt}
			\setlength{\leftmargin}{1.5em}
			\setlength{\labelwidth}{1em}
			\setlength{\labelsep}{0.5em} } }
\newcommand{\squishend}{
\end{list}  }
\title{Prior-Aware Distribution Estimation for Differential Privacy}
\author{Yuchao Tao, Johes Bater, Ashwin Machanavajjhala}
\date{
Duke University
}
\begin{document}

\maketitle

\begin{abstract}
    Joint distribution estimation of a dataset under differential privacy is a fundamental problem for many privacy-focused applications, such as query answering, machine learning tasks and synthetic data generation. In this work, we examine the joint distribution estimation problem given two data points: 1) differentially private answers of a workload computed over private data and 2) a prior empirical distribution from a public dataset. Our goal is to find a new distribution such that estimating the workload using this distribution is as accurate as the differentially private answer, and the relative entropy, or KL divergence, of this distribution is minimized with respect to the prior distribution. We propose an approach based on iterative optimization  for solving this problem. An application of our solution won second place in the NIST 2020 Differential Privacy Temporal Map Challenge, Sprint 2.
\end{abstract}

\section{Introduction}
Personal information collected as flat tabular data is used in many real world applications, such as medical records, community survey records and web visit records \cite{dankar2013practicing, ruggles2018ipums, fan2014monitoring}. Releasing these sensitive records to the public enables a variety of scientific research, but introduces risk due to privacy leakage. Differential privacy (DP) is a state-of-the-art technique used to prevent privacy leakage \cite{dwork2006calibrating}. Releasing a synthetic dataset under differential privacy provides public access to data along with strong privacy guarantees.

A common way to synthesize tabular data under DP is to privately learn the joint distribution by injecting noise according to a DP mechanism, such as the Laplace mechanism. When using the Laplace mechanism, we inject noise to each domain value, which results in extremely high error when the total domain size is large, which is common for datasets with multiple attributes. To mitigate this problem, many probabilistic-graphical-model based solutions are proposed, such as decomposing the joint distribution into a product of multiple small marginal distributions through either a Bayesian \cite{zhang2017privbayes} or Markov network \cite{chen2015differentially, bernstein2017differentially, mckenna2019graphical, zhang2020privately}. Another line of approach is to privately learn the joint distribution through estimation. One example, MWEM \cite{hardt2010simple}, updates the joint distribution according to the observed noisy marginal distribution using the multiplicative rule. Another example, PGM \cite{mckenna2019graphical}, learns the joint distribution on a graphical model from a marginal manifold while maximizing the entropy of non-observed distributions. All of these techniques are based on first estimating answers to linear queries under DP and then constructing a synthetic dataset from that.

However, all of these approaches assume no knowledge about the data before learning the joint distribution. Given a sensitive dataset for a specific domain, similar datasets can often be found that have been released publicly in the past.  We can treat the empirical joint distribution from the public dataset as a prior for our private joint distribution, which contains more information that can not be learned from the DP answers of a workload.
\eat{
When we have the noisy answer of a workload due to the DP mechanism, 
 we not only use it to estimate the private joint distribution, but also take the public prior into consideration.
 }

In this work, we consider the joint distribution estimation problem given a prior distribution and the answers to a linear query workload under DP. Our goal is to find a joint distribution such that 
the answers to the given workload derived from this distribution are sufficiently accurate, and the relative entropy, or KL divergence, of this distribution with respect to the prior is minimized. We believe this approach can improve the quality of the synthetic dataset sampled from the inferred joint distribution to evaluate more queries accurately even outside the given workload, as long as the prior captures the information precisely that we cannot learn from the DP answers of the given workload.

A similar approach is found in PMWPub \cite{liu2021leveraging}. It combines MWEM with public data by initializing the joint distribution with the public distribution and updating it using the multiplicative rule according to the noisy answer of the query selected through exponential mechanism from a fixed query class. The goal for PMWPub is to have a synthetic dataset such that any query from the fixed query class can be evaluated accurately. Unlike PMWPub, our goal is to estimate the joint distribution such that not only the given workload can be evaluated accurately, but also the relative entropy of the new distribution with respect to the prior is low.

Our contributions are as follows:
\squishlist{
\item We initiate the study of joint distribution estimation given a public prior and DP answers to a workload of linear queries.
\item The estimated distribution has two characteristics: 1) it has low $L_1$ error with respect to the workload. 2) it has low relative entropy with respect to the prior.
\item The estimated distribution is tractable since its support is covered by the support of the public prior, which is derived from a public dataset and is tractable by nature.
\item An application of this approach was the second place winning solution in the NIST 2020 Differential Privacy Temporal Map Challenge, Sprint 2 \cite{deid2}.
}

\squishend

\section{Preliminaries}

\paratitle{Data.} We consider a flat tabular data $\D$ with $m$ attributes and $n$ tuples. The domain for each attribute $\A_i$ is $\dom_i$, and the domain for a tuple $\tup$ in $\D$ is $\dom = \dom_1 \times \dom_2 \times \ldots \dom_m$. We assume each attribute domain is a discrete domain \footnote{For numerical domains, it is practical to discretize the domain by some natural partitions.}. Denote $N$ as the size of domain $\dom$. We consider a natural ordering of all the domain values in $\dom$ such that $\dom = \{v_1, v_2, ..., v_N\}$.  
We use $\f = [\f_1, \f_2, \ldots \f_N]$ to denote the frequency for each domain value $v_i$ in $\D$, which is a histogram on the domain $\dom$ for $\D$.
We further denote $x = [x_1, x_2, \ldots x_N]$ where $x_i = \f_i / n$ as the empirical density distribution of the data $\D$.  
 We also use the notation $x(v_i)$ to refer $x_i$.

\paratitle{Workload.} A linear query $u$ is a vector of size $N$, such that the answer is given as $u^Tx$. If $u$ only contains $\{0, 1\}$, it asks the empirical probability that a tuple in $\D$ satisfies any of the predicates defined by $u$. A workload $W$ of $M$ linear queries is a matrix of size $M \times N$, such that the answer is given as $y = Wx$. We assume $W$ only contains $\{0, 1\}$, which means all queries are empirical probability queries for some predicates. A $k$-way marginal query is a workload $W$ such that it asks the marginal probability empirically for some $k$ attributes. Denote $u[i]$ as the $i$-th entry in the vector $u$ and $W[i, j]$ as the entry at $i$-th row and $j$-th column. Denote $W[i, *]$ as the $i$-th row in $W$ and $W[*, j]$ as the $j$-th column in $W$. We use $\norm{\cdot}_p$ as the $L_p$ norm. Notice that for a matrix $W$, $\norm{W}_1$ is the max entry and $\norm{W}_{1,1}$ is the sum of absolute values of all entries.

\paratitle{Differential Privacy.} 
We assume the workload is answered under differential privacy.
\begin{definition}[Differential Privacy]
A randomized mechanism $M: \dom^* \rightarrow \mathcal{R}$ satisfies $\epsilon$-differential privacy (DP) if for any neighboring datasets $D$ and $D'$ which differ by a single tuple, and every event $O \subseteq \mathcal{R}$, we have:
\[\ln{\frac{P(M(D) \in O)}{P(M(D') \in O)}} \leq \epsilon \]
\end{definition}
A typical differential private mechanism is Laplace mechanism, which adds Laplace noise to the workload answers with the scale proportional to the global sensitivity that is the maximum change of the workload answers by changing one row in the data  \cite{dwork2006calibrating}. We denote $\nY$ as the mechanism output about answering $y = Wx$.

\eat{
\begin{definition}[Global Sensitivity]
Given a query $Q: \dom^* \rightarrow R^k$, its global sensitivity $\Delta$ is defined as 
\[\Delta(Q) = \max_{D \sim D'} |Q(D) - Q(D')| \]
where $D \sim D'$ denotes all neighbouring datasets.
\end{definition}

\begin{definition}[Laplace Mechanism]
Given a query $Q: \dom^* \rightarrow R^k$ with global sensitivity $\Delta(Q)$, Laplace mechanism satisfies $\epsilon$-DP by adding Laplace noise to the answer as:
\[M(D) = Q(D) + Lap(\Delta(Q) / \epsilon)\]
where $Lap(\Delta(Q) / \epsilon)$ samples $k$ i.i.d. random values from the Laplace distribution with scale equal to $\Delta(Q) / \epsilon$.
\end{definition}

Since the global sensitivity for a workload $W$ is $\norm{W}_1/n$, where $\norm{\cdot}_1$ denotes the maximal column sum, we can use Laplace mechanism to learn the marginal distribution by adding Laplace noise scaled to $\norm{W}_1/(n\epsilon)$ and satisfy $\epsilon$-DP.
}

\paratitle{Prior Distribution}
We assume a public dataset with the same domain of the private dataset is available. Usually, the mutual information between the public dataset and the private dataset is not trivial. We consider the empirical density distribution from the public dataset as a prior $\pp$, which is also a vector of size $N$. 

\begin{definition}[KL Divergence \cite{kullback1997information}]
Given two distributions $q$ and $p$ on the same domain $\dom$, the KL divergence is defined as
\begin{align}
D_{KL}(q || p) = \sum_{v_i \in \dom} q(v_i) \log \frac{q(v_i)}{p(v_i)}
\end{align}
. This is also called the relative entropy of $q$ with respect to $p$.
\end{definition}

\paratitle{Problem Statement.}
Given a private dataset $\D$ with $n$ tuples and a workload $W$ of queries about $x$, the empirical density distribution of $\D$,  we assume $W$ is answered by some mechanism that satisfies $\epsilon$-DP, such as Laplace mechanism \cite{dwork2006calibrating} and HDMM \cite{mckenna2018optimizing}, and denote the noisy answer as $\nY$. 
We also assume a prior distribution $\pp$ is given, which is the empirical density distribution from a public dataset. The goal is to find a new distribution $\qq$ such that the followings are satisfied.
\squishlist
 \item 1) Estimating $W$ from $\qq$ is at least as accurate as $\nY$; i.e., 
 we want to minimize the $L1$ distance between $W\qq$ and $\nY$. 
\item   2) Given the constraint that the distribution $\qq$ must minimize the $L_1$ distance between $W\qq$ and $\nY$, we want $\qq$ to be as similar as possible to the prior $\pp$; i.e., the relative entropy of $\qq$ with respect to the prior $\pp$, or KL divergence $D_{KL}(\qq || \pp)$, is low, according to the minimum relative entropy principle (MRE) \cite{woodbury1993minimum, olivares2007quantum, hou2005minimum, muhlenbein2005estimation}. When $\pp$ is a uniform prior, minimizing the KL divergence $D_{KL}(\qq || \pp)$ is equivalent to the principle of maximum entropy \cite{jaynes1957information}. 
\squishend


\section{Distribution Estimation}

In this work, we consider two sources of information that can help us to estimate the distribution of the private data: the prior distribution $\pp$ and the noisy answer $\nY$ of the workload $W$. 
Since $\nY$ is derived directly from the private data and the prior $\pp$ is purely an initial guess, we treat $\nY$ as the first-class information and $\pp$ as the second-class information. That is, when we estimate the private distribution $\qq$, we want $W \qq$ to be similar to $\nY$ as much as possible. If there is nothing more we can learn from the noisy answer $\nY$, we then learn from the prior $\pp$.

\subsection{Two-stage Optimization}

Given two optimization goals for the distribution estimation problem, we divide the optimization into two stages. In the first stage, we aim to find a set of valid distribution $\qq$ such that the $L_1$ distance between $W\qq$ and $\nY$ is minimized.
Denote $\ystar = \argmin_{y} \{\norm{y - \nY}_1 \mid \exists z, \norm{z}_1 = 1, z \geq 0, Wz = y\} $. We argue that $W\qq = \ystar$ is the sufficient and necessary condition that $\norm{W\qq - \nY}_1$ is minimized.

\eat{
e minimize th e loss of $\qq$ as L1 error regarding to $\nY$:
\begin{align*}
\min ~& \norm{Wz - \nY}_1 \\
s.t. ~& \norm{z}_1 = 1 \\
     ~& z \geq 0
\end{align*}
and denote the minimum as $\ystar$. \am{We are finding a $z$. So should this be $z^\star$?}\yt{Actually the only interesting part is the optimal objective, not the optimal value of the variable. Maybe it is confusing in this form?}  \am{If the noisy answers are inconsistent, does this step result in consistency? } \yt{I think it is true.}
}

In the second stage, we minimize the relative entropy of $\qq$ with respect to the prior $\pp$ with the constraint that $\qq$ must satisfy $W\qq = \ystar$, which follows the MRE principle:
\begin{align*}
\min ~&  \sum_{v_i \in \dom} \qq(v_i) \log \frac{\qq(v_i)}{\pp(v_i)} \\
s.t. ~& W\qq = \ystar  \\
~& \sum_{v_i \in \dom} \qq(v_i) = 1 \\
~& \forall v_i \in \dom, \qq(v_i) \geq 0 \numberthis
\end{align*}
and denote $q$ as $\qstar$ when the minimum is achieved. This is the final estimation of the private distribution.

\subsection{Optimization Implementation}

The first optimization problem is a linear programming problem, which can be solved efficiently if the size of $z$ is small, which is the domain size of the data $\D$. When the domain size is large, we can narrow down the domain of $\qq$ to the support from the prior $\pp$, which could be much smaller than the original domain size;
i.e., $\ystar = \argmin_{y} \{\norm{y - \nY}_1 \mid \exists z, \norm{z}_1 = 1, z \geq 0, \left(\forall v_i \in \dom , \pp(v_i) = 0 \implies z(v_i) = 0 \right), Wz = y\}$. 

For the second optimization problem, one can apply the method of Lagrange multipliers to solve:
\begin{dmath*}
    \Lagr(q, \lambda, \mu) = 
     \sum_{v_i \in \dom} \qq(v_i) \log \frac{\qq(v_i)}{\pp(v_i)}
     - \lambda^T(W\qq - \ystar) - \mu(\norm{\qq}_1 - 1)  
\end{dmath*}
where the vector $\lambda$ and $\mu$ are Lagrange multipliers.
The optimal solution for the original problem is thus given by the values when $\nabla{\Lagr}=0$. Solving $\nabla{\Lagr}=0$ gives \cite{woodbury1993minimum}
\begin{align}
    \qq(v_i) = \pp(v_i) \exp\left[ -1 -\mu - \lambda^T W[*, i] \right]
\end{align}
\eat{
or
\begin{align}
\mu + \lambda^T W[*, i] = - \ln\left(\frac{\qq(v_i)}{\pp(v_i)} \right) - 1
\end{align}
}
Since KL divergence is convex and the workload $W$ is linear, the solution is unique (\cite{kapur1989maximum}, chapter 8). However, it is not clear to have a closed form about $\lambda$ and $\mu$.

In the next section, we propose an iterative optimization approach to achieve both optimization goals, which improves the computational efficiency of solving the linear programming problem and bypasses solving the Lagrange multipliers from the equations.

\subsection{Iterative Optimization}

We consider merging two optimizations into one:
\begin{align*}
    \min ~& \norm{W\qq - \nY}_1\\
    s.t. ~& \qq \in \underset{z \in \mathcal{S}}{\argmin}\left( \sum_{v_i \in \dom} z(v_i) \log \frac{z(v_i)}{\pp(v_i)}\right) \\
         ~& \mathcal{S} = \{z \mid \norm{Wz - \nY}_1 = \norm{W\qq - \nY}_1, \\
         ~& \qquad \quad \;\;\, \norm{z}_1 = 1, z \geq 0\} \numberthis
\end{align*}
, which finds the optimal distribution $\qstar$ such that $\norm{W\qstar - \nY}_1$ is minimized first and then the relative entropy of $\qstar$ with respect to the prior $\pp$ is also minimized.

Based on this observation, we consider an iterative approach to find the optimal. At the beginning, we set $\qq$ as $\pp$. For each iteration, we select some queries from the workload with $L_1$-norm equal to 1 as a partial workload $W_s$, with corresponding partial noisy answers $\nY_s$. A typical partial workload $W_s$ with $L_1$-norm equal to 1 is a marginal query. We then find a new positive answer $y_s$ to the partial workload, such that $\norm{y_s - \nY_s}_1$ is minimized. The sum of $y_s$ should be equal to $1$ if $W_s$ covers the entire domain,  otherwise it should be within $1$. We then update $\qq$ according to $W_s$ and $y_s$ such that the relative entropy of $\qq$ with respect to the prior $\pp$ is minimized while we ensure $W_s \qq = y_s$. The details of distribution update is defined below. 
After $\qq$ is updated, we then move on to the next iteration by treating $\qq$ as the new prior, until the limit of iterations is met. 

\begin{definition}[Distribution Update Function]\label{def:distr_update} Given a prior $\pp$, a partial workload $W_s$ with $L_1$-norm equal to 1 and the corresponding optimal answer $y_s$, the function $f$ defined as follows is a distribution such that for each domain value $v_j \in \dom$ :
\begin{align*}
    &f(p; W_s, y_s)[j] = \\
    &\begin{cases}
    \frac{p[j] y_s[i]}{W_s[i, *]^Tp}, 
    & \text{if}~ \exists i, W_s[i, j] > 0 ~\&~ W_s[i, *]^Tp > 0 \\
    \frac{y_s[i]}{\norm{W_s[i, *]}_1}, 
    & \text{if}~ \exists i, W_s[i, j] > 0 ~\&~ W_s[i, *]^Tp = 0 \\
    \frac{p[j](1 - \norm{y_s}_1)}{1 -\norm{W_sp}_1}, 
    & \text{if}~ \forall i, W_s[i, j] = 0 ~\&~ 1 -\norm{W_sp}_1 > 0 \\
    \frac{(1 - \norm{y_s}_1)}{N -\norm{W_s}_{1,1}},  
    & \text{if}~ \forall i, W_s[i, j] = 0 ~\&~ 1 -\norm{W_sp}_1 = 0 \\
    \end{cases}
\end{align*}
\end{definition}

\begin{restatable}[Least Relative Entropy Update]{theorem}{lreu}
\label{thm:lreu}
Given a prior $\pp$, a partial workload $W_s$ with $L_1$-norm equal to 1 and the corresponding optimal answer $y_s$, the distribution update function $f(\pp; W_s, y_s)$ finds the distribution that is of minimal relative entropy with respect to the prior $\pp$.
\end{restatable}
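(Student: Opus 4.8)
The plan is to exploit the block structure that the hypothesis $\norm{W_s}_1 = 1$ imposes on $W_s$: under this condition both the feasible set $\{\qq : W_s\qq = y_s,\ \norm{\qq}_1 = 1,\ \qq\ge 0\}$ and the objective $D_{KL}(\qq \,||\, \pp)$ separate over disjoint groups of domain values, so that the problem collapses to a family of independent one-constraint minimizations, each of which has the classical closed form that $f$ is built from.

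\textbf{Step 1 (block decomposition).} Since $W_s$ is $\{0,1\}$-valued and $\norm{W_s}_1 = 1$, every column of $W_s$ has at most one non-zero entry, so the predicate sets $S_i := \{j : W_s[i,j] = 1\}$ are pairwise disjoint; writing $S_0 := \{j : W_s[i,j] = 0 \text{ for all } i\}$ for the uncovered values, the sets $S_0, S_1, \dots$ partition $\dom$. Because $\qq \ge 0$, the $i$-th coordinate of $W_s\qq$ is $\sum_{j\in S_i}\qq[j]$ and $\norm{\qq}_1 = \sum_j \qq[j]$, so (using that $y_s \ge 0$ and $\norm{y_s}_1 \le 1$, guaranteed by the construction of $y_s$) the constraints $W_s\qq = y_s$, $\norm{\qq}_1 = 1$, $\qq \ge 0$ are \emph{equivalent} to the per-block mass constraints $\sum_{j\in S_i}\qq[j] = y_s[i]$ for each covered block and $\sum_{j\in S_0}\qq[j] = 1 - \norm{y_s}_1$, together with $\qq \ge 0$.

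\textbf{Step 2 (separation and the one-block problem).} Write $D_{KL}(\qq \,||\, \pp) = \sum_i \sum_{j\in S_i}\qq[j]\log\frac{\qq[j]}{\pp[j]}$; the $i$-th summand depends only on the entries of $\qq$ indexed by $S_i$, and by Step 1 the feasibility constraints decouple across the blocks, so minimizing $D_{KL}(\qq\,||\,\pp)$ over the feasible set amounts to independently minimizing $g_B(\qq|_B) := \sum_{j\in B}\qq[j]\log\frac{\qq[j]}{\pp[j]}$ subject to a single mass constraint $\sum_{j\in B}\qq[j] = c$ over each block $B$ (with $c = y_s[i]$ or $c = 1 - \norm{y_s}_1$). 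If $c = 0$ the only feasible, hence optimal, choice is $\qq|_B = 0$. If $c > 0$ and $\pp(B) := \sum_{j\in B}\pp[j] > 0$, rescaling to conditional distributions $\bar\pp_j = \pp[j]/\pp(B)$, $\bar\qq_j = \qq[j]/c$ gives the identity
\begin{align*}
\sum_{j\in B}\qq[j]\log\tfrac{\qq[j]}{\pp[j]} = c\log\tfrac{c}{\pp(B)} + c\,D_{KL}(\bar\qq \,||\, \bar\pp) \ \ge\ c\log\tfrac{c}{\pp(B)},
\end{align*}
with equality (by the equality case of Gibbs' inequality / strict convexity of $t\log t$) iff $\bar\qq = \bar\pp$, i.e. the unique minimizer is $\qq[j] = \pp[j]\, c/\pp(B)$. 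If $c > 0$ but $\pp(B) = 0$, every feasible $\qq$ puts positive mass on some $j$ with $\pp[j] = 0$, so $g_B \equiv +\infty$ on the feasible set and any feasible point is optimal; the uniform vector $\qq[j] = c/|B|$ is one such.

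\textbf{Step 3 (matching $f$ and conclusion).} It remains to check that $f(\pp; W_s, y_s)$ of Definition \ref{def:distr_update} returns exactly these per-block minimizers, via the dictionary $\pp(S_i) = W_s[i,*]^T\pp$, $|S_i| = \norm{W_s[i,*]}_1$, $\pp(S_0) = 1 - \norm{W_s\pp}_1$, $|S_0| = N - \norm{W_s}_{1,1}$, and $c_0 = 1 - \norm{y_s}_1$: the first case of $f$ is Step 2 for a covered block with $\pp(B) > 0$, the second case is the uniform choice for a covered block with $\pp(B) = 0$, and the third and fourth cases are these same two subcases applied to the uncovered block $S_0$. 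Because the per-block target masses sum to $\norm{y_s}_1 + (1-\norm{y_s}_1) = 1$, $f(\pp;W_s,y_s)$ is automatically a valid probability vector, and by Step 2 it attains the minimum of $D_{KL}(\cdot\,||\,\pp)$ over the feasible set, proving the theorem.

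\textbf{Main obstacle.} The Lagrange/Gibbs computation for a block with $\pp(B) > 0$ is routine; the delicate part is the degenerate blocks, where the prior assigns zero mass to a block that must carry positive mass. There $D_{KL}$ is $+\infty$ for \emph{every} feasible $\qq$, so ``minimal relative entropy'' must be read as ``a minimizer of an everywhere-infinite objective'', and one has to argue that the arbitrary (uniform) choice hard-coded into $f$ is a legitimate optimum and still produces a genuine distribution; this is exactly where the conventions for $0\log 0$ and $q\log(q/0)$ must be pinned down. A secondary point to make explicit is why $\norm{W_s}_1 = 1$ forces the disjoint-block structure that Steps 1--2 rely on.
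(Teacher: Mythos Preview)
Your proposal is correct and reaches the same conclusion as the paper, but the route differs in a way worth recording. The paper's sketch argues via the Lagrangian: it writes the stationarity condition $\qq(v_i) = \pp(v_i)\exp[-1-\mu-\lambda_j]$ for the unique $j$ with $W_s[j,i]=1$, then eliminates the multipliers using the linear constraint $\sum_{i:W_s[j,i]=1}\qq(v_i)=y_s[j]$, and finally treats the uncovered coordinates and the $\pp(B)=0$ case by ad hoc remarks. You instead make the block decomposition explicit up front (observing that $\norm{W_s}_1=1$ with $\{0,1\}$ entries forces disjoint predicate sets), separate both the objective and the constraints over blocks, and solve each one-constraint subproblem by the rescaling-plus-Gibbs'-inequality identity rather than via multipliers. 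Your argument is more elementary and more complete: it avoids any appeal to constraint qualification or to ``convexity implies the stationary point is the global minimum'', and it handles the degenerate blocks (where $D_{KL}=+\infty$ on the whole feasible set) with an explicit discussion that the paper only gestures at. The two derivations are of course equivalent in spirit---the Lagrange form $\pp(v_i)e^{-1-\mu-\lambda_j}$ is exactly the proportional-to-prior scaling that Gibbs' inequality singles out---but your presentation makes the structural reason for the closed form clearer.
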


\begin{algorithm}[H] 
\caption{Iterative Distribution Estimation} 
\begin{algorithmic}[1]
\Require A $M \times N$ workload $W$, the corresponding DP answer $\nY$ based on a private dataset, the public prior $\pp$, and an iteration parameter $T$.
\Ensure The estimated distribution $\qq$ of the underlying private data.

\Function{IDE}{$W$, $\nY$, $\pp$, $T$}
\State $\qq^{(0)} \gets \pp$ 
\For{$t \gets 1 \ldots T$}
\State Randomly select a partial workload $W_s$ from $W$ such that $\norm{W_s}_1 = 1$.
\State Let $\nY_s$ be the corresponding answers from $\nY$ regarding to $W_s$.
\If {$\norm{W_s}_{1,1} = N$}
    \State $y_s \gets \argmin_{\norm{y_s}_1 = 1, y_s \geq 0}\{\norm{y_s - \nY_s}_1\}$
\Else
    \State $y_s \gets \argmin_{\norm{y_s}_1 \leq 1, y_s \geq 0}\{\norm{y_s - \nY_s}_1\}$ 
\EndIf
\State $\qq^{(t)} \gets f(\qq^{(t-1)}; W_s, y_s)$ 
\Comment{See Def \ref{def:distr_update}}
\EndFor
\State \Return $\qq^{(T)}$.
\EndFunction
\end{algorithmic}
\label{alg:iterative}
\end{algorithm}

Algorithm \ref{alg:iterative} illustrates the flow of the iterative approach to find $\qstar$. After each iteration, we reduce the loss about $\norm{W\qq - \nY}_1$, while the rule of distribution update also keeps the updated distribution to be close to the previous distribution in the sense of relative entropy. After sufficient iterations, the loss $\norm{W\qq - \nY}_1$ of the final distribution should be close to the minimal, and the relative entropy of $\qq$ with respect to the prior $\pp$ as $D_{KL}(\qq || \pp)$ is low. It is not yet clear how convergent this iterative algorithm is.

To boost the computational efficiency, we can narrow down the domain of $\qq$ as the support of the public prior $\pp$ for the algorithm. After the algorithm is finished, we can implicitly extend the distribution of $\qq$ to the full domain such that $\qq(v_i) = 0$ if $\pp(v_i) = 0$. 


\section{Application}

In ``2020 Differential Privacy Temporal Map Challenge" \cite{deid2}, an application of our approach was the second place winning solution for the sprint 2.  This challenge is basically to generate a synthetic dataset such that its k-way marginals has low error.

\paratitle{Data.} The data considered in the contest is a subset of IPUMS American Community Survey data \cite{ruggles2018ipums} about demographic and financial features for some states across some years. The data about states Ohio and Illinois from 2012-2018 is given as a public dataset. Each row of data includes the geographical and temporal features: PUMA and YEAR, and other 33 additional survey features, such as the biological features, work-related features and financial features. The privacy is at the individual level, and we define individual stability $\Delta$ as the max number of records that is associated with a single individual.

\paratitle{Measurement.} 
The error of the synthetic dataset is measured as follows. Both the synthetic dataset and the ground-truth data are first grouped by the PUMA and YEAR. For each group, we randomly choose two attributes other than PUMA and YEAR, generate the marginal distributions empirically from the synthetic and the ground-truth dataset, and compute the $L_1$ distance between two marginal distributions, which ranges from 0 to 2. This is repeated for a fixed number of times for each group, and the error for each group is the average $L_1$ distance across repetitions. If the total count of a group differs from the true count by 250, the error of that group will be set to 2 as the bias penalty. The final error is the average of errors from all groups.

\subsection{Algorithm}

The main idea for our approach is based on ``select-measure-reconstruct" \cite{mckenna2018optimizing}. We first select some marginal queries, measure them by Laplace mechanism, and then reconstruct the joint distribution by updating the public prior using Algorithm \ref{alg:iterative} ``Iterative Distribution Estimation". We define it as a subroutine \textproc{Prior\_Update} in Algorithm \ref{alg:sdg} (see Appendix), as it takes a prior as input and outputs an updated prior. An updated prior can be further updated by calling \textproc{Prior\_Update} again. We apply this idea in Algorithm \ref{alg:sdg}. Here we consider grouping the private data $D$ by PUMA and YEAR. Each group is expressed as a triple $(puma, year, \f)$, where $\f$ is the frequency distribution.  At the first round, we re-group the data by STATE, which is inferred from the PUMA attribute, and then apply the subroutine to each STATE partition to update the prior. This is because we believe the joint distributions of all PUMA-YEAR groups in the same STATE are similar, and we have sufficient data to update the prior accurately. The workload  chosen for the state-level data is denoted as $\Wstate$. Now we get a new prior for each STATE. Within each STATE, we run this subroutine for each PUMA-YEAR group with the new prior. The workload chosen for the group-level data is denoted as $\Wgroup$. 

\begin{restatable}{theorem}{sdg}
\label{thm:sdg} Algorithm \ref{alg:sdg} satisfies $\epsilon$-DP.
\end{restatable}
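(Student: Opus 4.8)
The plan is to treat Theorem~\ref{thm:sdg} as a routine privacy-accounting argument assembled from three standard ingredients: the privacy of the Laplace mechanism, the sequential and parallel composition theorems, and post-processing immunity. First I would isolate every step of Algorithm~\ref{alg:sdg} that actually inspects the private data $\D$: these are exactly the Laplace-noise measurements taken inside the \textproc{Prior\_Update} calls --- the state-level measurements of the workload $\Wstate$ in the first round, the group-level measurements of $\Wgroup$ in the second round, and (if present) any noisy group or total counts used for suppression, re-scaling, or the bias penalty. Everything else --- initializing $\qq^{(0)}$ from the public prior $\pp$, the linear program producing $\ystar$, and the iterative updates $f(\qq^{(t-1)};W_s,y_s)$ of Algorithm~\ref{alg:iterative} --- is a (possibly independently randomized) function of the noisy answers and public data only, hence incurs no additional privacy cost by post-processing.

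Second, I would compute the sensitivity of each Laplace measurement under the contest's neighboring relation, where two datasets differ by one \emph{individual} and an individual owns at most $\Delta$ records. A marginal workload $W_s$ with $\norm{W_s}_1 = 1$ maps a count histogram to an answer vector whose $L_1$ norm changes by at most $1$ per inserted or deleted record; hence changing one individual changes the answer by at most $\Delta$ in $L_1$. A measurement that adds $\mathrm{Lap}(\Delta/\epsilon_i)$ noise to such a workload therefore satisfies $\epsilon_i$-DP, and I would verify that the noise scales fixed in Algorithm~\ref{alg:sdg} are set so that each per-measurement budget $\epsilon_i$ matches what the budget split prescribes.

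Third, I would assemble the global guarantee. The re-grouping by STATE is a genuine partition of the records with respect to individuals (each record, and each individual, belongs to a single state), so the state-level round is covered by \emph{parallel} composition across states and costs only the budget $\epsilon_S$ allotted to $\Wstate$. The PUMA--YEAR groups are the delicate case: an individual surveyed across several years contributes records to several groups, so these groups are \emph{not} a partition at the individual level, and parallel composition cannot be applied blindly to the group-level round. Instead I would bound the number of PUMA--YEAR groups a single individual can influence by $\Delta$ (they own at most $\Delta$ records) and charge the group-level round via sequential composition over those $\le\Delta$ groups, combined with parallel composition over the blocks the individual does not touch; this is exactly where a $\Delta$ factor must reappear in the group-level noise scale. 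Finally the two rounds compose sequentially, so $\epsilon_S + \epsilon_G$ (plus any count budget) must equal $\epsilon$. The main obstacle is precisely this last point --- reconciling individual-level privacy with the non-partitioning PUMA--YEAR grouping, i.e., confirming that Algorithm~\ref{alg:sdg}'s budget allocation already pays for every group a worst-case individual can reach, rather than optimistically treating the groups as disjoint at the individual level.
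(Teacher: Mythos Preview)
Your plan matches the paper's argument: isolate the two Laplace measurements inside each \textproc{Prior\_Update} call (the noisy count $\tilde n$ and the noisy workload answer $\nY$), show each is $\epsilon_0$-DP, observe that every record participates in two such calls (once at the state level, once at the group level), and compose to $4\epsilon_0=\epsilon$. The paper's sketch is terser than yours and simply invokes ``sequential and parallel composition'' without flagging the PUMA--YEAR overlap you worry about; that issue is silently absorbed by the $\Delta$ factor already present in both noise scales, exactly as you suspected. A cleaner way to phrase your fix is to avoid the per-group sequential bookkeeping altogether: view the concatenation of all group-level count queries as a single vector-valued query whose $L_1$ sensitivity under the individual-level neighbor relation is at most $\Delta$, so Laplace noise of scale $\Delta/\epsilon_0$ yields $\epsilon_0$-DP in one stroke (and likewise for the concatenated workload answers, with sensitivity $\norm{\Wgroup}_1\Delta$).

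One small inconsistency in your write-up: you grant the state partition the status of a genuine individual-level partition (``each individual belongs to a single state''), but an individual with records in several survey years could also have moved between states, so the same overlap concern applies to the state-level round. There is no need to treat the two rounds asymmetrically; the $\Delta$-scaled Laplace argument you propose for the group round covers the state round identically, and the budget arithmetic $2\epsilon_0+2\epsilon_0=\epsilon$ goes through unchanged.
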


\subsection{Parameter Tuning}
We also do a heavy parameter tuning for the contest, especially for the choice of workloads.

\paratitle{Workload $\Wstate$ and $\Wgroup$.} We only consider marginal queries for our workloads. 
The goal is to find marginal queries such that after we estimate the distribution based on the DP answers of these marginal queries, the total error of all 2-way marginal queries is low. This problem can be viewed as selecting a strategy workload for the target workload with all 2-way marginals in the framework of HDMM \cite{mckenna2018optimizing}. Unlike HDMM, of which the strategy selection is data independent, we consider a data dependent strategy to select the workload based on the public dataset.
Given the public dataset, we compute the pair-wise mutual information \cite{steuer2002mutual} for all attributes, and covert it into a graph by adding edges with high mutual information. We then select some cliques from the graph as the workload $\Wstate$ for the state-level data, based on the idea of learning distribution by graphical model \cite{zhang2017privbayes, chen2015differentially, bernstein2017differentially, mckenna2019graphical, zhang2020privately}. For data of each PUMA-YEAR group, since the data size is much smaller, we cannot learn many high-way marginals accurately, so we assume the correlation between attributes in the prior is accurate and we only need to learn some roots in the graphical model. In practice, we select some one-way marginal queries for the workload $\Wgroup$ such that those attributes are approximately independent according to the public dataset. 

\eat{
Ideally, if we have more marginal queries, we have a better estimation of the distribution, but the sensitivity of the entire workload is high. Besides, if the domain of a marginal query is high, we will also end up injecting too much noise when the size of data is not sufficiently large. Therefore, we consider re-grouping the data by STATE, which allows us to ask high-way marginal queries with low error. In contrast, we only ask one-way marginal queries for each PUMA-YEAR group. \am{Could you have used HDMM for the select step? Why or Why not?}\yt{HDMM is data independent}

The selection of high-way marginal or one-way marginals is based on the pair-wise mutual information \cite{steuer2002mutual}. If there exists some attributes that all have high mutual information with each other, we select them as a high-way marginal query. For select one-way marginal queries, we use an opposite strategy. For attributes that have low mutual information with each other, we select each as a one-way marginal query.  \am{I think this can be described more succinctly.}
}

\eat{
\paratitle{Random workload selection.} In Algorithm \ref{alg:iterative} "Iterative Distribution Estimation", we randomly select a partial workload $W_s$ from $W$ such that $\norm{W_s}_1 =1$. In practice, we treat every marginal query in the workload as a candidate partial workload and select them in a fixed sequence. 
}

\paratitle{Iteration parameter $T$.} The iteration parameter $T$ should be set as a large number to ensure the iterative distribution estimation converges. In practice, we set $T$ as two times the number of marginal queries in the workload during the contest due to the run-time limit. However, if the code implementation is faster, $T$ should be set larger.


\subsection{Result}

In the contest, the final evaluation is based on a secret dataset that is not publicly accessible. We evaluate our algorithm locally on a private dataset that is the same as the public dataset. Notice that the algorithm doesn't know that they are the same. Figure \ref{fig:result} shows the average $L_1$ error of random 50 2-way marginals for each PUMA-YEAR group under $\epsilon=10$, which is around 0.2 to 0.4. There is also no bias penalty for the synthetic dataset. The result is based on one run of the algorithm.

\begin{figure}[H]
    \centering
    \begin{subfigure}[b]{0.49\textwidth}
        \includegraphics[width=\linewidth]{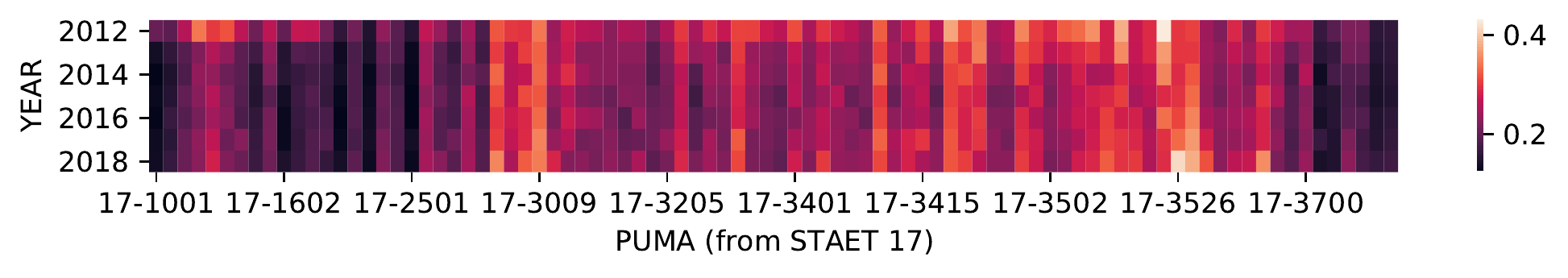}
    \end{subfigure}
    \begin{subfigure}[b]{0.49\textwidth}
        \includegraphics[width=\linewidth]{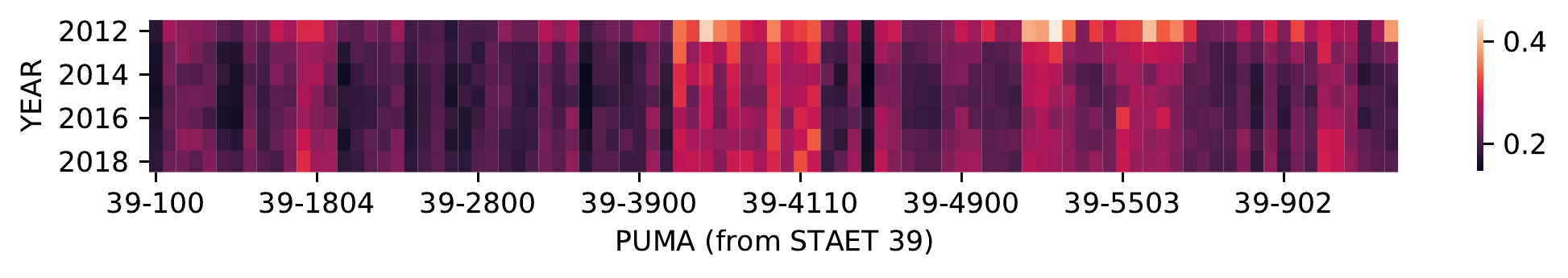}
    \end{subfigure}

    \caption{Average L1 error of random 50 2-way marginals for each PUMA-YEAR group under $\epsilon = 10$. }
    \label{fig:result}
\end{figure}

\clearpage

\nobalance
\bibliographystyle{unsrt}
\bibliography{refs}

\clearpage
\appendix

\section{Algorithms}

\begin{algorithm}[H] 
\caption{Synthetic Data Generation} 
\begin{algorithmic}[1]
\Require A workload $\Wstate$ for the state-level data, a workload $\Wgroup$ for the group-level data, the private data $D$ as a set of $\{(puma, year, \f)\}$ where $\f$ is the frequency distribution for the group $(puma, year)$, the public prior $\pp$, the individual stability $\Delta$, the privacy budget $\epsilon$, and an iteration parameter $T$. 
\Ensure A synthetic dataset.
\Function{Prior\_Update}{$W, \f, \pp, T, \epsilon_0$}
    \State $\tilde{n} \gets \norm{\f}_1 + Laplace(\Delta / \epsilon_0)$
    \State $\tilde{n} \gets \round{\max\{\tilde{n}, 0 \}}$
    \State $\nY \gets W \f + Laplace(\norm{W}_1  \Delta / \epsilon_0)$
    \State $\nY \gets \nY / \tilde{n}$
    \State $\pp' \gets$ IDE($W, \nY, \pp, T$) 
    \Comment{See Algorithm \ref{alg:iterative}}
    \State \Return $\pp'$, $\tilde{n}$
\EndFunction
\State $D' \gets \emptyset$
\State $\epsilon_0 \gets \epsilon / 4$
\For{$S \gets$ GROUP-BY($D$, STATE)}
    \State $\f_s \gets \sum_{(puma, year, \f) \in S} \f$
    \State $\pp_s, \tilde{n}_s \gets$ \textproc{Prior\_Update}($\Wstate, \f_s, \pp, T, \epsilon_0$)
    \For{$(puma, year, \f_g) \in S$}
    \State $\pp_g, \tilde{n}_g \gets$ \textproc{Prior\_Update}($\Wgroup, \f_g, \pp_s, T, \epsilon_0$)
        \State $\f' \gets$ sample $\tilde{n}_g$ tuples from $\pp_g$
        \State $D' \gets D' \cup (puma, year, \f')$
    \EndFor
\EndFor
\State \Return $D'$.
\end{algorithmic}
\label{alg:sdg}
\end{algorithm}

\section{Theorem Proof}

\lreu* 
\begin{proof}(sketch)
Consider the value of $\qq(v_i)$ in the optimal solution. For a partial workload $W_s$ with $L1$-norm equal to 1, the optimal solution according to the Lagrange function has a simple form:
\[
\qq(v_i) = \pp(v_i) \exp\left[ -1 -\mu - \lambda_j \right]
\]
for some $j$ such that $W_s[j, i] = 1$. We discuss the case latter when there is no $j$ such that this condition holds. Now we fix $j$, from the constraints we have 
\[
\sum_{i \mid W_s[j, i] = 1} q(v_i) = y_s[j] 
\]
. Together we have
\[
\sum_{i \mid W_s[j, i] = 1} \pp(v_i) \exp\left[ -1 -\mu - \lambda_j \right]= y_s[j]
\]
, which implies
\[ \exp\left[ -1 -\mu - \lambda_j \right]= \frac{y_s[j]}{\sum_{i \mid W_s[j, i] = 1} \pp(v_i)}
\]. Take this back to the first equation, we have
\[
\qq(v_i) = \pp(v_i) \frac{y_s[j]}{\sum_{i \mid W_s[j, i] = 1} \pp(v_i)}
\]
When $\sum_{i \mid W_s[j, i] = 1} \pp(v_i) = 0$, we divide $y_s[j]$ equally for each $q(v_i)$. 

For $\qq(v_i)$ in the case that there is no $j$ such that $W_s[j, i] = 1$, the optimal solution has this form:
\[
\qq(v_i) = \pp(v_i) \exp\left[ -1 -\mu \right]
\]
Similar analysis can be applied to this case. Notice that in this case, from the constraints we know that the sum of all such $q(v_i)$ should be equal to $1 - \norm{y_s}_1$.
\end{proof}

\sdg*
\begin{proof} (sketch)
Each call of \textproc{Prior\_Update} runs two Laplace mechanisms sequentially, each satisfy $\epsilon_0$-DP. For each PUMA-YEAR group, it is taken (partially) as the input for \textproc{Prior\_Update} twice. According to sequential composition rule and parallel composition rule, the entire algorithm satisfies $4 \epsilon_0$-DP, which is also $\epsilon$-DP.
\end{proof}
\end{document}